\documentclass[conference]{IEEEtran}
\IEEEoverridecommandlockouts
\usepackage{cite}
\usepackage{amsmath,amssymb,amsfonts}
\usepackage[ruled,vlined,linesnumbered]{algorithm2e}
\usepackage{graphicx}
\usepackage{url}
\usepackage{textcomp}
\usepackage{amsthm}

\newtheorem{proposition}{Proposition}
\begin{document}

\title{Outage-Aware Robust Sensing for UAV ISAC Systems
\thanks{Accepted by IEEE GLOBECOM 2026 Workshop}}

\author{\IEEEauthorblockN{
        Qiming~Li,
        Lei~Zhang,
        Haoran~Xu
        and~Lina~Mohjazi}
        \IEEEauthorblockA{
        School of Engineering, University of Glasgow, G12 8QQ, U.K.\\
        Email: 
            q.li.5@research.gla.ac.uk,
            Lei.Zhang@glasgow.ac.uk,
            h.xu.3@research.gla.ac.uk,
            Lina.Mohjazi@glasgow.ac.uk
        }

}

\maketitle

\begin{abstract}
Integrated sensing and communications (ISAC) infrastructure can provide an external perception layer for aerial robots, but dynamic blockage makes the set of informative sensing links at the next update uncertain. A single predicted availability pattern can miss low-information outcomes, whereas protecting all patterns wastes power and may become infeasible when all links are blocked. We propose an outage-aware conformal robust power-allocation framework for multi-BS UAV tracking. A history-aware joint predictor maps radar and tracking histories to next-slot availability probabilities, and adaptive prediction sets retain plausible correlated patterns. Tracking constraints are enforced only for retained non-outage patterns, while the all-blocked pattern is treated separately as a physical sensing outage. With per-slot beam directions fixed, the resulting scalar-power allocation is a semidefinite program. A reliability bound relates serviceable-slot tracking failure to deployed-policy miscoverage and optimization infeasibility. Across paired closed-loop simulations over five random seeds, the method attains a 0.01\% serviceable-slot constraint-violation rate and uses 47.0\% less power than protection over all non-outage patterns.
\end{abstract}

\begin{IEEEkeywords}
6G-enabled robotics, integrated sensing and communications (ISAC), UAV sensing, conformal prediction, resource allocation.
\end{IEEEkeywords}

\section{Introduction}
Unmanned aerial vehicles (UAVs) are aerial robots used in logistics, inspection, and emergency response, while unauthorized UAVs motivate airspace surveillance. These settings require sensing and tracking of UAV position and motion for safe operation and situational awareness. Integrated sensing and communications (ISAC)-enabled cellular infrastructure can obtain these states by reusing spectrum and hardware \cite{liu2022isac,li2026distributed}. Cooperative multi-base-station (BS) sensing adds spatial diversity \cite{xie2022collaborative,xie2023networked}, but only informative BS--UAV links contribute to tracking. Under mobility and dynamic blockage, availability at slot $t$ may change by slot $t+1$, while power selected at slot $t$ determines next-slot observation quality. The allocator must decide before the next joint pattern is revealed, and common blockers make current-state or independent per-link forecasts insufficient.

Existing ISAC allocation methods optimize waveforms, beamformers, or powers under communication and sensing constraints \cite{liu2018mimo,bazzi2023outage,attiah2024duality}, but generally assume sensing-information states known at decision time. Blockage predictors use wireless or visual histories for proactive handoff \cite{alkhateeb2018blockage,charan2021vision}, including federated and vision-aided prediction \cite{alquraan2023federated,alquraan2022intelligent}, but do not propagate future joint-state uncertainty into tracking-constrained allocation. The interface is also closed-loop because the current sensing action changes the observations later used by the tracker and predictor.

Prediction uncertainty cannot be safely reduced to one surrogate. A maximum a posteriori (MAP) rule may omit a plausible low-information pattern, whereas expected Fisher information matrix (FIM) feasibility does not ensure per-realization feasibility. Conformal prediction supplies finite-sample marginal prediction sets \cite{romano2020classification} and has wireless applications \cite{cohen2022calibrating}. Yet enforcing every retained pattern can make the problem infeasible when the set contains the all-blocked realization: its zero FIM means power cannot restore a violated tracking threshold. Set-valued uncertainty must therefore preserve correlated joint outcomes while separating serviceable patterns from physical outage.

The resulting gap is an uncertainty interface that retains correlated next-slot patterns, passes their calibrated uncertainty to robust allocation, and excludes the power-insensitive all-blocked state from enforceable tracking constraints. 

We close this gap with a serviceability-conditioned conformal robust framework. A history-aware joint predictor and adaptive prediction set (APS) calibration retain plausible patterns. Serviceable members enter robust FIM constraints, while all-blocked is retained for warning and accounting. Fixed per-slot directions yield a scalar-power semidefinite program (SDP), and a coverage-to-tracking bound separates miscoverage, infeasibility, and outage.

The main contributions are summarized as follows.
\begin{itemize}
\item We develop a history-aware joint availability predictor and an APS-calibrated pattern set that retains correlated next-slot uncertainty.
\item We separate serviceable patterns from all-blocked outage and cast tracking-constrained scalar-power allocation as an SDP.
\item We derive a coverage-to-tracking reliability bound and evaluate the framework against point, expectation-based, full-set, all-serviceable, and oracle policies in paired closed-loop simulations.
\end{itemize}

\section{System Model}
\subsection{Network and Tracking Model}

\begin{figure*}[!t]
\centering
\includegraphics[width=\textwidth]{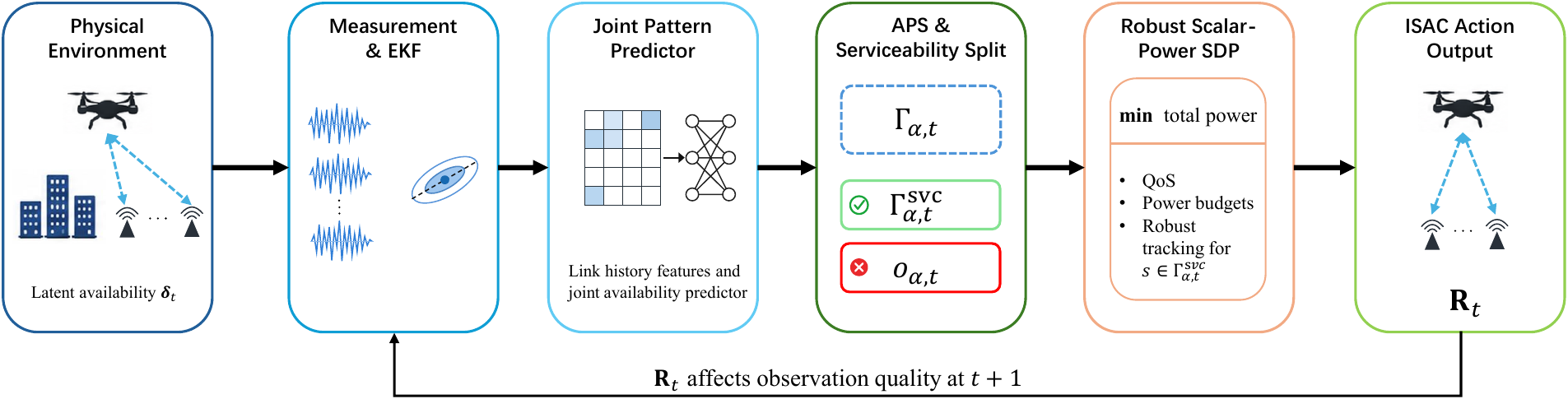}
\caption{Closed-loop data flow of the proposed serviceability-conditioned UAV ISAC framework.}
\label{fig:system_overview}
\end{figure*}

Consider an ISAC downlink in which $B$ sensing-capable base stations (BSs) serve $K$ downlink users while tracking a UAV that does not receive communication service. Let $\mathcal{B}=\{1,\ldots,B\}$ and $\mathcal{K}=\{1,\ldots,K\}$. The UAV horizontal state at slot $t$ is
\begin{equation}
\mathbf{x}_t=[q_{x,t},q_{y,t},v_{x,t},v_{y,t}]^\mathsf{T},
\end{equation}
where $(q_{x,t},q_{y,t})$ and $(v_{x,t},v_{y,t})$ are the horizontal position and velocity. A nearly constant velocity (NCV) model is used here:
\begin{equation}
\mathbf{x}_{t+1}=\mathbf{F}\mathbf{x}_t+\mathbf{w}_t,\qquad
\mathbf{w}_t\sim\mathcal{N}(\mathbf{0},\mathbf{Q}).
\end{equation}
Here, $\mathbf{F}$ is the NCV state-transition matrix and $\mathbf{Q}$ is the process-noise covariance. The tracker utilizes a standard extended Kalman filter (EKF) \cite{barshalom2001estimation}, where $\hat{\mathbf{x}}_{t|t}$ and $\mathbf{P}_{t|t}$ denote the posterior state estimate and covariance, respectively. Its prediction step gives
\begin{align}
\hat{\mathbf{x}}_{t+1|t} &= \mathbf{F}\hat{\mathbf{x}}_{t|t},\\
\mathbf{P}_{t+1|t} &= \mathbf{F}\mathbf{P}_{t|t}\mathbf{F}^\mathsf{T}+\mathbf{Q}.
\end{align}
Fig.~\ref{fig:system_overview} summarizes the causal closed-loop data flow of the proposed system.

\subsection{Sensing-Link Availability}
For BS $b$, define the binary sensing availability $\delta_{b,t}\in\{0,1\}$, where $\delta_{b,t}=1$ means that the BS--UAV sensing link can produce an effective radar update under the current propagation state, and $\delta_{b,t}=0$ means that severe blockage or an insufficient echo budget prevents a reliable update. To decouple the availability label from the optimizer’s current power decision, let $\mathrm{SCNR}^{\max}_{b,t}$ denote the
signal-to-clutter-plus-noise ratio (SCNR) evaluated at the maximum admissible probing power. We define:
\begin{equation}
\delta_{b,t}=\mathbb{I}\left\{\mathrm{SCNR}^{\max}_{b,t}\geq \gamma_{\rm det}\right\},
\label{eq:availability}
\end{equation}
where $\mathbb{I}\{\cdot\}$ is the indicator function and $\gamma_{\rm det}$ is the minimum SCNR threshold for an effective radar update. Thus, $\delta_{b,t}$ represents whether the environment permits link $b$ to support useful sensing under the admissible power budget, rather than whether the current allocation happens to illuminate it sufficiently. Missed detections and false alarms affect only the realized radar observations introduced in Sec.~\ref{sec:radar_obs}.

The joint availability vector $\boldsymbol{\delta}_t=[\delta_{1,t},\ldots,\delta_{B,t}]^\mathsf{T}$ takes one of $2^B$ binary patterns, indexed by the finite pattern set $\mathcal{S}=\{0,\ldots,2^B-1\}$. For each $s\in\mathcal{S}$, let $\boldsymbol{\delta}^{(s)}=[\delta_1^{(s)},\ldots,\delta_B^{(s)}]^\mathsf{T}$ denote the corresponding binary pattern. The realized pattern index $S_t\in\mathcal{S}$ is defined by $\boldsymbol{\delta}_t=\boldsymbol{\delta}^{(S_t)}$.

\subsection{Serviceable Patterns and Sensing Outage}
The joint pattern set is divided according to physical serviceability. The serviceable subset is
\begin{equation}
\mathcal{S}_{\rm svc}=
\left\{
s\in\mathcal{S}:
\sum_{b=1}^{B}\delta_b^{(s)}\geq 1
\right\}.
\label{eq:svc_set}
\end{equation}
The all-blocked outage pattern is
\begin{equation}
s_{\rm out}=0,\qquad
\boldsymbol{\delta}^{(s_{\rm out})}=[0,\ldots,0]^\mathsf{T} .
\label{eq:outage_pattern}
\end{equation}
Patterns in $\mathcal{S}_{\rm svc}$ contain at least one potentially informative link, whereas $s_{\rm out}$ cannot produce a valid radar update under the admissible power budget. It is therefore treated as a sensing outage rather than a beamforming failure.

\subsection{Propagation, Blockage, and Channel Models}
For each BS--UAV sensing link, let $\xi_{b,t}\in\{\mathsf{L},\mathsf{P},\mathsf{B}\}$ denote the blockage state, corresponding to line-of-sight, partial blockage, and complete blockage. This state captures the environment-induced visibility process and may exhibit both temporal and cross-link correlation. The associated attenuation factor is $\kappa(\xi_{b,t})$, with $1=\kappa(\mathsf{L})>\kappa(\mathsf{P})>\kappa(\mathsf{B})\geq0$.

Let $r_{b,t}=\|\mathbf{q}_t-\mathbf{q}_{b}\|_2$ be the three-dimensional BS--UAV distance, where $\mathbf{q}_t=[q_{x,t},q_{y,t},h_{\rm u}]^\mathsf{T}$, $h_{\rm u}$ is the UAV altitude, and $\mathbf{q}_b$ is the three-dimensional position of BS $b$. Following the radar equation \cite{skolnik2001radar}, the maximum SCNR used in \eqref{eq:availability} is 
\begin{equation}
\mathrm{SCNR}^{\max}_{b,t}=
\frac{P_{b,\rm probe}^{\max}G_{b}^{\rm tx}G_{b}^{\rm rx}\lambda_c^2\sigma_{\rm rcs}
\kappa(\xi_{b,t})}
{(4\pi)^3r_{b,t}^4\left(N_0W_{\rm r}F_b+C_{b,t}\right)},
\label{eq:scnr}
\end{equation}
where $P_{b,\rm probe}^{\max}$ is the maximum admissible probing power, $G_b^{\rm tx}$ and $G_b^{\rm rx}$ are the sensing transmit and receive gains, $\lambda_c$ is the wavelength, $\sigma_{\rm rcs}$ is the UAV radar cross section, $N_0W_{\rm r}F_b$ is receiver noise power, and $C_{b,t}$ is residual clutter.

The downlink channel from BS $m$ to user $k$ is
\begin{subequations}
    \begin{align}
        \mathbf{h}_{m,k,t}&=\sqrt{\beta_{m,k,t}}\,\widetilde{\mathbf{h}}_{m,k,t},
        \label{eq:comm_channel}\\
        \widetilde{\mathbf{h}}_{m,k,t}
        &=\sqrt{\frac{\varrho_{m,k,t}}{\varrho_{m,k,t}+1}}
        \mathbf{a}_{m,k} +
        \sqrt{\frac{1}{\varrho_{m,k,t}+1}}
        \mathbf{R}_{m,k}^{1/2}\mathbf{w}_{m,k,t},
    \end{align}
\end{subequations}
where $\beta_{m,k,t}$ follows a distance-dependent path-loss and shadowing model with blockage attenuation, $\varrho_{m,k,t}$ is the Rician factor, $\mathbf{a}_{m,k}$ is the deterministic line-of-sight array response of BS $m$, $\mathbf{R}_{m,k}$ is the spatial covariance matrix of the scattered component, and $\mathbf{w}_{m,k,t}\sim\mathcal{CN}(\mathbf{0},\mathbf{I})$. The full network channel state information (CSI) $\mathbf{h}_{k,t}$ stacks $\{\mathbf{h}_{m,k,t}\}_{m=1}^{B}$, which is also available to the resource allocator.
\addtolength{\topmargin}{0.05 in}
\addtolength{\textheight}{-0.05 in}
\subsection{Radar Observations and Closed-Loop Timing}
\label{sec:radar_obs}
When link $b$ is available, its radar measurement is modeled as
\begin{equation}
\mathbf{z}_{b,t}=\mathbf{g}_b(\mathbf{x}_t)+\mathbf{n}_{b,t},\qquad
\mathbf{n}_{b,t}\sim\mathcal{N}(\mathbf{0},\boldsymbol{\Sigma}_{b,t}).
\label{eq:measurement}
\end{equation}
The measurement function is
$\mathbf{g}_b(\mathbf{x}_t)=
\begin{bmatrix}
r_{b,t} & \phi_{b,t} & \dot r_{b,t}
\end{bmatrix}^\mathsf{T}$,
where
\begin{subequations}
    \begin{align}
        r_{b,t} &= \|\mathbf{q}_t-\mathbf{q}_b\|_2,\\
        \phi_{b,t} &= \operatorname{atan2}(q_{y,t}-q_{b,y},q_{x,t}-q_{b,x}),\\
        \dot r_{b,t} &=
        \frac{(q_{x,t}-q_{b,x})v_{x,t}+(q_{y,t}-q_{b,y})v_{y,t}}{r_{b,t}} .
    \end{align}
\end{subequations}
The EKF uses the Jacobian $\mathbf{H}_b=\partial\mathbf{g}_b/\partial\mathbf{x}$ evaluated at the predicted state. When $\delta_{b,t}=0$, BS $b$ contributes no valid measurement update. When $\delta_{b,t}=1$, the inverse measurement covariance increases with the previous-slot illumination:
\begin{equation}
\boldsymbol{\Sigma}_{b,t}^{-1}
=\zeta_{b,t}\mathbf{D}_b,\qquad
\mathbf{D}_b\triangleq
\operatorname{diag}(d_{r,b},d_{\phi,b},d_{\dot r,b}),
\label{eq:obs_precision}
\end{equation}
where $d_{r,b}$, $d_{\phi,b}$, and $d_{\dot r,b}$ are the unit-illumination precision coefficients for range, azimuth, and radial velocity, respectively, and $\zeta_{b,t}=\iota_{b,t-1}$ is the precision scaling induced by the preceding target-directed illumination. The observed echo power $e_{b,t}$ and the binary detection flag $\chi_{b,t}$ follow the same timing. The detection flag follows an echo-dependent detection model, with a fixed false-alarm rate under blockage.

\section{Calibrated Availability Prediction}
\subsection{History-Aware Joint Pattern Predictor}
Because common blockers correlate links and the allocator constrains complete patterns, the predictor maps $\mathcal{H}_t$ to a joint probability vector over $S_{t+1}$. Let $L$ denote the history length. For BS $b$ and $\tau\in\{t-L+1,\ldots,t\}$, the feature is
\begin{equation}
\begin{aligned}
\mathbf{f}_{b,\tau}={}&
\big[\chi_{b,\tau},\widetilde e_{b,\tau},
\widehat{\mathrm{SCNR}}_{b,\tau},
r_{b,\tau},\phi_{b,\tau},\\
&\phantom{\big[}\dot r_{b,\tau},
\nu_{b,\tau},\rho_{b,\tau},
\iota_{b,\tau-1},\mu_{b,\tau}\big]^\mathsf{T}.
\end{aligned}
\end{equation}
Here, $(\chi_{b,\tau},\widetilde e_{b,\tau},
\widehat{\mathrm{SCNR}}_{b,\tau})$ summarize detection and echo
evidence, with $\widehat{\mathrm{SCNR}}_{b,\tau}$ denoting the
echo-derived SCNR estimate; $(r_{b,\tau},\phi_{b,\tau},
\dot r_{b,\tau})$ describe the estimated relative sensing geometry from
$\hat{\mathbf x}_{\tau|\tau}$ through $\mathbf g_b(\cdot)$;
$\nu_{b,\tau}$ and $\rho_{b,\tau}$ are the instantaneous and
smoothed normalized innovation squared (NIS), respectively; $\iota_{b,\tau-1}$ is the preceding
target-directed illumination; and $\mu_{b,\tau}$ is the
log-transformed consecutive-miss count. The normalized echo is
$\widetilde e_{b,\tau}=e_{b,\tau}/(\iota_{b,\tau-1}+\epsilon_0)$,
where $\epsilon_0>0$ prevents unstable division under weak
illumination.

The shared motion context is
\begin{equation}
\boldsymbol{\psi}_{t}=
\big[
\hat{\mathbf{x}}_{t|t}^\mathsf{T},
\lambda_1(\mathbf{P}_{t|t}),\ldots,\lambda_4(\mathbf{P}_{t|t}),
\widehat v_t,\widehat \theta_t
\big]^\mathsf{T},
\end{equation}
where
$\widehat v_t=\sqrt{\widehat v_{x,t}^{\,2}+\widehat v_{y,t}^{\,2}}$
and
$\widehat\theta_t=\operatorname{atan2}(\widehat v_{y,t},\widehat v_{x,t})$
are computed from the estimated velocity components.
The complete predictor input is $\mathcal{H}_t\triangleq(\{\mathbf{f}_{b,t-L+1:t}\}_{b=1}^{B},\boldsymbol{\psi}_t)$. The covariance eigenvalues encode tracking confidence, while speed and heading help separate local link loss from a trajectory change. A shared gated recurrent unit (GRU) with trainable parameters $\boldsymbol{\Theta}$ encodes each length-$L$ history:
\begin{equation}
\mathbf{c}_{b,t}=
\operatorname{GRU}_{\boldsymbol{\Theta}}
\left(\mathbf{f}_{b,t-L+1:t}\right).
\end{equation}
Shared weights capture common dynamics, while larger $L$ controls the temporal context available to the predictor. With $\bar{\mathbf{c}}_{t}=\sum_b\mathbf{c}_{b,t}/B$, the joint head outputs
\begin{equation}
\widehat{\boldsymbol{\pi}}_t
=\operatorname{softmax}
\left(f_{\boldsymbol{\Theta}}\left(
\mathbf{c}_{1,t},\ldots,\mathbf{c}_{B,t},
\bar{\mathbf{c}}_{t},
\boldsymbol{\psi}_t\right)\right)
\in\mathbb{R}^{2^B}.
\end{equation}
Its $s$-th component, $\widehat{\pi}_{t,s}$, estimates $\Pr\{S_{t+1}=s\mid\mathcal{H}_t\}$. The joint head preserves cross-link correlation. Because APS uses this ranking, training minimizes negative log-likelihood (NLL)
\begin{equation}
\mathcal L_{\rm pred}(\boldsymbol{\Theta})
=
-\frac{1}{n_{\rm tr}}
\sum_{i=1}^{n_{\rm tr}}
\log \widehat \pi_{i,y_i},
\end{equation}
where $n_{\rm tr}$ is the number of training samples and $y_i$ is the realized next-slot pattern label of sample $i$. The output is a ranked joint distribution rather than a hard link decision.

\subsection{Adaptive Prediction Set Calibration}
APS converts the ranking into a variable-size uncertainty set. Calibration samples are obtained from held-out episodes under a fixed All-serviceable reference policy. For calibration sample $i$, let $r_i$ denote the rank of the true label after sorting the predicted probabilities in descending order; its APS score is the cumulative probability up to and including this rank. The APS threshold $\widehat{c}_\alpha$ is the $\lceil(n_{\rm cal}+1)(1-\alpha)\rceil$-th ordered calibration score. At test time, patterns are sorted in decreasing probability and included in $\Gamma_{\alpha,t}$ up to and including the first one whose cumulative probability reaches $\widehat c_\alpha$. Under exchangeability between calibration and test scores generated by this reference policy,
\begin{equation}
\Pr\left\{S_{t+1}\in\Gamma_{\alpha,t}\right\}\geq 1-\alpha .
\label{eq:coverage}
\end{equation}
Deployed actions can shift histories, so coverage is policy-dependent. Since the set may include the all-blocked state, which power cannot improve, the serviceable prediction set and the outage warning flag are
\begin{equation}
\Gamma^{\rm svc}_{\alpha,t}
=\Gamma_{\alpha,t}\cap\mathcal{S}_{\rm svc}, \qquad o_{\alpha,t}=\mathbb{I}\{s_{\rm out}\in\Gamma_{\alpha,t}\}.
\label{eq:gamma_svc}
\end{equation}
$|\Gamma^{\rm svc}_{\alpha,t}|$ sets the robust constraint count, while $o_{\alpha,t}$ records a plausible power-insensitive outage. An empty serviceable set yields communication-only allocation.

\section{Serviceability-Conditioned Power Allocation}
\subsection{Fixed-Direction Transmission}
Fixing directions within each slot avoids joint beam--power optimization and keeps the FIM affine in scalar powers.
\begin{equation}
\mathbf{u}_{b,t}=
\frac{\mathbf{E}_b^\mathsf{H}\mathbf{a}_b(\hat{\mathbf{q}}_{t+1|t})}
{\left\|\mathbf{E}_b^\mathsf{H}\mathbf{a}_b(\hat{\mathbf{q}}_{t+1|t})\right\|_2},
\label{eq:sensing_direction}
\end{equation}
where $\hat{\mathbf{q}}_{t+1|t}=[\hat q_{x,t+1|t},\hat q_{y,t+1|t},h_{\rm u}]^\mathsf{T}$ is the one-step predicted UAV position, $\mathbf{E}_b$ selects BS $b$'s antennas and $\mathbf{a}_b(\cdot)$ is its steering vector. The sensing direction is refreshed after each EKF update. The unit-norm communication directions $\mathbf{v}_{k,t}$ are obtained by zero-forcing (ZF) or regularized zero-forcing (RZF) using the current communication CSI. Powers $p^{\rm c}_{k,t}\geq0$ and $p^{\rm s}_{b,t}\geq0$ control user service and sensing illumination:
\begin{equation}
\mathbf{R}_{t}=
\sum_{k=1}^{K}p^{\rm c}_{k,t}\mathbf{v}_{k,t}\mathbf{v}_{k,t}^\mathsf{H}
+\sum_{b=1}^{B}p^{\rm s}_{b,t}\mathbf{u}_{b,t}\mathbf{u}_{b,t}^\mathsf{H}.
\label{eq:covariance}
\end{equation}
Directions are recomputed each slot, while $p^{\rm s}_{b,t}=0$ realizes sensing handover through power redistribution without losing convexity.

\subsection{Communication Constraints}
For brevity, the common slot index $t$ is suppressed below. Define
$\mathbf{p}^{\rm c}=[p_1^{\rm c},\ldots,p_K^{\rm c}]^\mathsf{T}$,
$\mathbf{p}^{\rm s}=[p_1^{\rm s},\ldots,p_B^{\rm s}]^\mathsf{T}$,
and
$\mathbf{p}=[(\mathbf{p}^{\rm c})^\mathsf{T},
(\mathbf{p}^{\rm s})^\mathsf{T}]^\mathsf{T}$.
The sensing action must satisfy the communication quality-of-service (QoS) targets and per-BS power budgets. For user $k$, define the fixed-direction gains
\begin{equation}
g_{k\ell}=|\mathbf{h}_k^\mathsf{H}\mathbf{v}_{\ell}|^2,\qquad
g^{\rm s}_{kb}=|\mathbf{h}_k^\mathsf{H}\mathbf{u}_{b}|^2 .
\end{equation}
Here, $g_{kk}$ is the desired gain, $g_{k\ell}$ is multiuser-interference gain, and $g^{\rm s}_{kb}$ is leakage from sensing beam $b$. The downlink signal-to-interference-plus-noise ratio (SINR) requirement is
\begin{equation}
p_k^{\rm c}g_{kk}\geq
\gamma_k\left(
\sum_{\ell\neq k}p_{\ell}^{\rm c}g_{k\ell}
+\sum_{b=1}^{B}p_b^{\rm s}g_{kb}^{\rm s}
+\sigma_k^2
\right),\quad \forall k.
\label{eq:sinr}
\end{equation}
Here, $\gamma_k$ is the SINR target of user $k$ and $\sigma_k^2$ is its receiver noise power. Larger $\gamma_k$ improves communication protection but leaves less power for sensing. With fixed directions, \eqref{eq:sinr} is linear, as are the local budgets
\begin{equation}
\sum_{k=1}^{K}\eta^{\rm c}_{m,k}p_k^{\rm c}
+\sum_{b=1}^{B}\eta^{\rm s}_{m,b}p_b^{\rm s}
\leq P_m^{\max},\quad \forall m,
\label{eq:pbs}
\end{equation}
where $P_m^{\max}$ is the total transmit-power budget of BS $m$, $\eta^{\rm c}_{m,k}=\|\mathbf{E}_m\mathbf{v}_{k,t}\|_2^2$ and $\eta^{\rm s}_{m,b}=\|\mathbf{E}_m\mathbf{u}_{b,t}\|_2^2$ are BS-$m$ beam fractions.

\subsection{Pattern-Dependent Tracking Information}
Each availability pattern activates a different sensing geometry, so the same power vector must be evaluated realization-wise.
\begin{equation}
\iota_b(\mathbf{p})=
\sum_{k=1}^{K}p_k^{\rm c}
\left|\mathbf{a}_b^\mathsf{H}\mathbf{E}_b\mathbf{v}_{k}\right|^2
+\sum_{j=1}^{B}p_j^{\rm s}
\left|\mathbf{a}_b^\mathsf{H}\mathbf{E}_b\mathbf{u}_{j}\right|^2.
\label{eq:illumination}
\end{equation}
Both beam types illuminate the UAV according to their steering alignment. For pattern $s$, $\delta_b^{(s)}$ gates link $b$'s contribution to the sensing FIM:
\begin{equation}
\mathbf{J}^{(s)}(\mathbf{p})=
\sum_{b=1}^{B}\delta_b^{(s)}\iota_b(\mathbf{p})
\mathbf{H}_b^\mathsf{T}\mathbf{D}_b\mathbf{H}_b,
\label{eq:fim}
\end{equation}
Here, $\delta_b^{(s)}$ removes blocked links, while $\mathbf{H}_b^\mathsf{T}\mathbf{D}_b\mathbf{H}_b$ captures the geometry and unit-illumination precision of link $b$.

With $\mathbf{P}^{-}_{t+1}=\mathbf{P}_{t+1|t}$, combining prior and sensing precision gives the candidate posterior covariance
\begin{equation}
\mathbf{P}_{+}^{(s)}(\mathbf{p})=
\left[
(\mathbf{P}^{-}_{t+1})^{-1}+
\mathbf{J}^{(s)}(\mathbf{p})
\right]^{-1}.
\label{eq:posterior}
\end{equation}
Let $\epsilon_{\rm tr}$ be the nominal evaluation threshold and $\epsilon_{\rm plan}\leq\epsilon_{\rm tr}$ the allocator threshold. With $\mathbf{B}_q=[\mathbf{I}_2,\mathbf{0}]$ selecting position states, the required position covariance trace is
\begin{equation}
\operatorname{tr}\left(
\mathbf{B}_q\mathbf{P}_{+}^{(s)}(\mathbf{p})
\mathbf{B}_q^\mathsf{T}
\right)
\leq \epsilon_{\rm plan},
\qquad
\forall s\in\Gamma^{\rm svc}_{\alpha,t}.
\label{eq:tracking_constraint}
\end{equation}
Smaller $\epsilon_{\rm plan}$ increases robustness, power, and infeasibility risk. Since $\mathbf{J}^{(s_{\rm out})}=\mathbf{0}$, $s_{\rm out}$ is treated as outage.

\subsection{Serviceability-Conditioned Robust Power Allocation}
The allocator minimizes total power while satisfying communication and tracking for every retained serviceable pattern:
\begin{subequations}
\label{eq:main_problem}
\begin{align}
\min_{\mathbf{p}^{\rm c},\mathbf{p}^{\rm s}}\quad&
\sum_{k=1}^{K}p_k^{\rm c}+\sum_{b=1}^{B}p_b^{\rm s}\\
\mathrm{s.t.}\quad&
\eqref{eq:sinr},\ \eqref{eq:pbs},\ \eqref{eq:tracking_constraint}\\
&p_k^{\rm c}\geq0,\quad p_b^{\rm s}\geq0,\quad \forall k,b .
\end{align}
\end{subequations}
Robustness applies to $\Gamma_{\alpha,t}^{\rm svc}$, while an empty set yields communication-only allocation. For convexity, introduce a $4\times4$ auxiliary matrix $\mathbf{U}_s\succeq\mathbf{0}$ for each $s\in\Gamma_{\alpha,t}^{\rm svc}$ and apply the Schur complement \cite{boyd2004convex}:
\begin{subequations}
\begin{align}
&\begin{bmatrix}
(\mathbf{P}_{t+1}^{-})^{-1}+\mathbf{J}^{(s)}(\mathbf{p})
& \mathbf{I}_4\\
\mathbf{I}_4 & \mathbf{U}_s
\end{bmatrix}
\succeq \mathbf{0},
\qquad
\forall s\in\Gamma_{\alpha,t}^{\rm svc},
\\
&\operatorname{tr}\left(
\mathbf{B}_q\mathbf{U}_s\mathbf{B}_q^\mathsf{T}
\right)
\leq \epsilon_{\rm plan},
\qquad
\forall s\in\Gamma_{\alpha,t}^{\rm svc}.
\end{align}
\label{eq:schur_trace}
\end{subequations}
Since $(\mathbf{P}^{-}_{t+1})^{-1}+\mathbf{J}^{(s)}(\mathbf{p})\succ\mathbf{0}$, these linear matrix inequalities (LMIs) are equivalent to \eqref{eq:tracking_constraint}. Problem~\eqref{eq:main_problem} admits an SDP formulation with one Schur-complement LMI per retained serviceable pattern.

\begin{proposition}[Coverage-to-tracking reliability]
\label{prop:coverage_tracking}
For a fixed deployed policy, let $\mathcal{V}_{t+1}$ denote the event
that the realized posterior position trace exceeds the nominal
threshold $\epsilon_{\rm tr}$, and let $\mathcal{F}_t$ denote
the event that $\Gamma^{\rm svc}_{\alpha,t}\neq\emptyset$ but
\eqref{eq:main_problem} is infeasible. Define
$p_{\rm miss}=\Pr\{S_{t+1}\notin\Gamma_{\alpha,t}\}$,
$p_{\rm inf}=\Pr\{\mathcal{F}_t\cap
(S_{t+1}\in\mathcal{S}_{\rm svc})\}$, and
$p_{\rm out}=\Pr\{S_{t+1}=s_{\rm out}\}<1$.
Under the covariance model \eqref{eq:posterior}, applying the
robust solution whenever feasible yields
\begin{align}
R_{\rm svc}
&\triangleq \Pr\{\mathcal{V}_{t+1}^{c}\mid
S_{t+1}\in\mathcal{S}_{\rm svc}\}
\geq \left[1-
\frac{p_{\rm miss}+p_{\rm inf}}
{1-p_{\rm out}}\right]_{+},
\label{eq:svc_reliability_bound}\\
A_{\rm all}
&\triangleq p_{\rm out}
+(1-p_{\rm out})(1-R_{\rm svc})\notag\\
&\leq \min\!\left\{1,\,
p_{\rm out}+p_{\rm miss}+p_{\rm inf}\right\}.
\label{eq:all_acct_bound}
\end{align}
Here, $[x]_{+}=\max\{x,0\}$. Calibration--deployment exchangeability gives $p_{\rm miss}\leq\alpha$. Otherwise, $p_{\rm miss}$ is the deployed-policy miscoverage. Feasibility on every serviceable slot gives $p_{\rm inf}=0$.
\end{proposition}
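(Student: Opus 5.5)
The argument is a union bound applied on the serviceable event $\{S_{t+1}\in\mathcal{S}_{\rm svc}\}$, followed by conditioning. The key deterministic fact comes first. Suppose $S_{t+1}=s\in\mathcal{S}_{\rm svc}$, $s\in\Gamma_{\alpha,t}$, and \eqref{eq:main_problem} is feasible, so the robust solution $\mathbf{p}^\star$ is applied. Since $s\in\mathcal{S}_{\rm svc}$, we have $s\in\Gamma^{\rm svc}_{\alpha,t}$ by \eqref{eq:gamma_svc}, and this set is therefore nonempty. The constraint \eqref{eq:tracking_constraint} then holds for this very pattern. Under the covariance model \eqref{eq:posterior}, the realized posterior is exactly $\mathbf{P}_{+}^{(s)}(\mathbf{p}^\star)$, so its position trace is at most $\epsilon_{\rm plan}\le\epsilon_{\rm tr}$, and $\mathcal{V}_{t+1}^c$ occurs.

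Taking the contrapositive,
\begin{equation*}
\mathcal{V}_{t+1}\cap\{S_{t+1}\in\mathcal{S}_{\rm svc}\}\subseteq \{S_{t+1}\notin\Gamma_{\alpha,t}\}\cup\big(\mathcal{F}_t\cap\{S_{t+1}\in\mathcal{S}_{\rm svc}\}\big).
\end{equation*}
The case of an empty $\Gamma^{\rm svc}_{\alpha,t}$ with $S_{t+1}$ serviceable needs no separate treatment. In that case $S_{t+1}\notin\Gamma^{\rm svc}_{\alpha,t}$, and since $S_{t+1}\in\mathcal{S}_{\rm svc}$ this forces $S_{t+1}\notin\Gamma_{\alpha,t}$, so the event is already counted in the miss term. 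The same reasoning covers a nonempty set that omits $S_{t+1}$. Infeasibility with $\Gamma^{\rm svc}_{\alpha,t}\neq\emptyset$ is exactly $\mathcal{F}_t$.

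The union bound then gives $\Pr\{\mathcal{V}_{t+1}\cap(S_{t+1}\in\mathcal{S}_{\rm svc})\}\le p_{\rm miss}+p_{\rm inf}$. Because $\mathcal{S}_{\rm svc}=\mathcal{S}\setminus\{s_{\rm out}\}$, we have $\Pr\{S_{t+1}\in\mathcal{S}_{\rm svc}\}=1-p_{\rm out}>0$, so we may divide by it. This yields $1-R_{\rm svc}\le (p_{\rm miss}+p_{\rm inf})/(1-p_{\rm out})$. Since $R_{\rm svc}\ge0$ trivially, \eqref{eq:svc_reliability_bound} follows with the positive part.

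For \eqref{eq:all_acct_bound}, we substitute the unclipped inequality $(1-p_{\rm out})(1-R_{\rm svc})\le p_{\rm miss}+p_{\rm inf}$ into the definition of $A_{\rm all}$. We then note that $A_{\rm all}\le p_{\rm out}+(1-p_{\rm out})=1$ because $R_{\rm svc}\ge 0$. Taking the minimum of the two bounds gives the claim.

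The remaining remarks follow directly. Under calibration--deployment exchangeability, \eqref{eq:coverage} gives $p_{\rm miss}\le\alpha$. If the problem is feasible on every slot with $S_{t+1}$ serviceable, the intersection event is empty and $p_{\rm inf}=0$.

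The main subtlety is ensuring that the deterministic implication uses the realized posterior exactly as modeled. This requires that the applied $\mathbf{p}^\star$ determines the next-slot precision through $\zeta_{b,t+1}=\iota_b(\mathbf{p}^\star)$, and that no additional noise beyond \eqref{eq:posterior} enters. I would state this identification explicitly as the content of "under the covariance model."
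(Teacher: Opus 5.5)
Your proposal is correct and follows the paper's own argument. The paper likewise shows that a retained serviceable realization outside $\mathcal{F}_t$ has trace at most $\epsilon_{\rm plan}\le\epsilon_{\rm tr}$, so a serviceable violation implies miscoverage or infeasibility, then applies the union bound and conditions on $\{S_{t+1}\in\mathcal{S}_{\rm svc}\}$, and adds $p_{\rm out}$ for the accounting bound; you simply make explicit what the paper leaves implicit, namely the empty-$\Gamma^{\rm svc}_{\alpha,t}$ case, the trivial cap at $1$, and the identification of the realized posterior with \eqref{eq:posterior}.
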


\begin{proof}
For a retained serviceable realization outside $\mathcal{F}_t$, the constraint ensures a trace no larger than $\epsilon_{\rm plan}\leq\epsilon_{\rm tr}$ and thus rules out $\mathcal{V}_{t+1}$. Thus, serviceable violation implies miscoverage or infeasibility. The union bound and conditioning prove \eqref{eq:svc_reliability_bound}. Adding outage proves
\eqref{eq:all_acct_bound}.
\end{proof}

The per-slot control loop is summarized in Algorithm~\ref{alg:control}.

\begin{algorithm}[t]
\footnotesize
\DontPrintSemicolon
\KwIn{$\mathcal{H}_t$, $\mathbf{R}_{t-1}$, communication CSI, and EKF posterior}
\KwOut{current transmit action $\mathbf{R}_{t}$}
Update the EKF from observations generated by $\mathbf{R}_{t-1}$; form features\;
Predict $\widehat{\boldsymbol{\pi}}_t$; construct $\Gamma_{\alpha,t}$ by APS and split it using \eqref{eq:gamma_svc}\;
\lIf{$o_{\alpha,t}=1$}{issue a predicted sensing-outage warning}
Form ZF/RZF communication and predicted-position sensing directions\;
\lIf{$\Gamma^{\rm svc}_{\alpha,t}=\emptyset$}{select a communication-only action}
\lElse{solve \eqref{eq:main_problem}; use the robust solution if feasible, otherwise use the worst-trace fallback}
Apply $\mathbf{R}_{t}$; it affects the radar observations at slot $t+1$\;
\caption{Outage-aware per-slot control}
\label{alg:control}
\end{algorithm}

\section{Simulation Results}
\label{sec:experiments}

\subsection{Setup and Metrics}
\label{subsec:setup}
Results are averaged over five seeds in a geometry-based $1000\,\mathrm{m}\times1000\,\mathrm{m}$ scene with $B=3$ ISAC BSs at $(80,160,30)$, $(920,180,30)$, and $(500,900,30)$~m, $K=2$ communication users, and a UAV flying at an altitude of $120$ m. Rectangular blockers, distance attenuation, and the SCNR threshold in \eqref{eq:availability} determine physical availability. As a simulation simplification, latent availability gates EKF updates; echo noise and detection errors affect policy histories. Held-current and Markov are given the same latent information for a matched comparison. For each seed, the dataset contains 300, 60, 60, and 50 episodes for training, validation, calibration, and testing, respectively, with 60 slots per episode.

All policies use matched random draws within each test episode. Unless otherwise stated, $\alpha=0.1$, $L=12$, the prediction horizon is one slot, the SINR target is $5$ dB, the per-BS power budget is $P_m^{\max}=100$ in dimensionless simulation units, and $(\epsilon_{\rm plan},\epsilon_{\rm tr}) =(0.45,0.5)~\mathrm{m}^2$. Violations are evaluated using $\epsilon_{\rm tr}$. Performance is measured by the serviceable-slot violation $1-R_{\rm svc}$ and the outage-inclusive accounting rate $A_{\rm all}$.

\subsection{Prediction and Calibration}

Prediction is sufficiently concentrated for APS to form compact uncertainty sets, while calibration is empirically conservative. On held-out validation episodes replayed under the fixed All-serviceable reference policy, the GRU achieves 98.4\% top-2 accuracy, and APS retains 3.0 patterns on average while attaining 99.96\% empirical coverage at the nominal 90\% level. Coverage also remains above 99.3\% under deployed policies. Thus, in the tested regime, APS mainly controls how much plausible ambiguity is passed to the allocator, trading robustness overhead against efficiency rather than compensating for frequent prediction misses.

\subsection{Main Policy Comparison}

\begin{table}[!t]
\caption{Closed-loop cost and feasibility at $B=3$, $K=2$, $\alpha=0.1$.}
\label{tab:main_results}
\centering
\scriptsize
\setlength{\tabcolsep}{3.1pt}
\begin{tabular}{lcccc}
\hline
Policy & $A_{\rm all}$ & Fallback$^\ast$ & Avg. power & Avg. patterns\\
\hline
Held-current & $9.46{\pm}0.73\%$ & $0.88\%$ & $12.20{\pm}0.46$ & $0.94{\pm}0.01$\\
Markov & $7.70{\pm}0.72\%$ & $1.28\%$ & $14.66{\pm}0.23$ & $1.65{\pm}0.01$\\
MAP & $18.01{\pm}2.27\%$ & $0.81\%$ & $12.03{\pm}0.57$ & $0.92{\pm}0.03$\\
Expected-FIM & $13.39{\pm}0.96\%$ & $0.92\%$ & $13.40{\pm}0.34$ & $0.00{\pm}0.00$\\
Conformal-hard & $83.62{\pm}4.01\%$ & $83.68\%$ & $6.50{\pm}0.83$ & $5.24{\pm}0.35$\\
Conformal-serviceable & $5.74{\pm}0.71\%$ & $1.69\%$ & $20.23{\pm}0.37$ & $3.44{\pm}0.05$\\
All-serviceable & $5.73{\pm}0.71\%$ & $4.82\%$ & $38.14{\pm}0.35$ & $7.00{\pm}0.00$\\
Oracle & $5.74{\pm}0.71\%$ & $0.89\%$ & $12.05{\pm}0.45$ & $0.94{\pm}0.01$\\
\hline
\end{tabular}
\parbox{0.98\columnwidth}{\scriptsize \emph{Notes:} The policies use the current pattern, Markov top-2, MAP, probability-weighted FIM, full APS, serviceable APS, all-serviceable, and true patterns, respectively. Avg. power is the slot-averaged total transmit power in the dimensionless simulation units defined in Sec.~\ref{subsec:setup}. Avg. patterns is the average number of pattern-wise constraints. Expected-FIM has no pattern-wise constraints and is therefore reported as $0$. $^\ast$On infeasibility, fallback minimizes worst trace over protected patterns plus $10^{-3}$ total power, retaining SINR and BS budgets. Entries with $\pm$ report the mean $\pm$ standard error of the mean (SEM) over five seeds.}
\end{table}

\begin{figure}[!t]
\centering
\includegraphics[width=\linewidth]{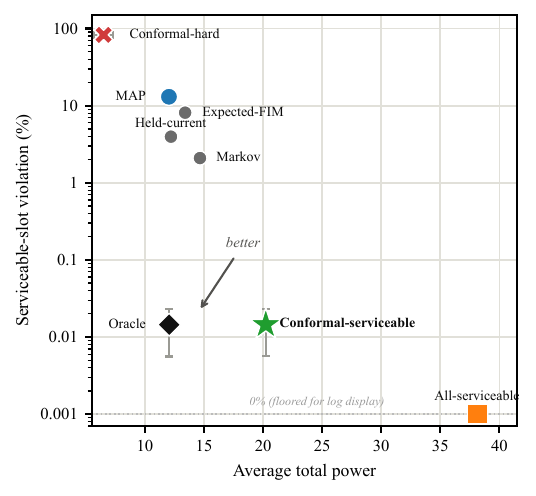}
\caption{Power and reliability for the policies in Table~\ref{tab:main_results}. Error bars show $\pm$ SEM, and the lower-left region is preferred. The exact-zero All-serviceable point is floored at $0.001\%$ for the log scale.}
\label{fig:policy_frontier}
\end{figure}

Table~\ref{tab:main_results} and Fig.~\ref{fig:policy_frontier} reveal a clear robustness–efficiency ordering. Point or averaged surrogates under-protect uncertain future states: MAP and Markov act on only a small number of predicted patterns, while Expected-FIM can mask low-information realizations through averaging. At the opposite extreme, Conformal-hard over-protects by treating the zero-FIM all-blocked state as enforceable, which drives frequent fallback. Serviceability conditioning removes this non-actionable state while retaining set-valued protection over plausible serviceable patterns, placing the proposed method close to the outage floor. The horizontal gaps in Fig.~\ref{fig:policy_frontier} have a direct interpretation: the gap to Oracle is the price of acting before the true pattern is revealed, whereas the gap to All-serviceable is the benefit of pruning implausible patterns. The latter reduces power by 47.0\%.

The baseline failures support this interpretation. MAP suppresses useful sensing when a top-1 error crosses the serviceability boundary, which occurs on $5.0\%$ of the serviceable slots. Conformal-hard fails for the opposite reason: once the uncontrollable all-blocked state enters its robust set, no power can restore the trace target, producing an $83.68\%$ fallback rate. The proposed serviceability split avoids both mismatches by protecting uncertainty only where power can change the sensing information.

\subsection{Risk, Shift, and Scalability}

Increasing $\alpha$ reduces the protected set and required power, while serviceable-slot violation remains nearly unchanged and below $0.03\%$. Thus, in the current overcovered regime, $\alpha$ mainly controls efficiency rather than reliability. With the nominal predictor and APS threshold frozen, six shifts (BS layout, denser blockers, faster blockers, UAV turns, higher detection thresholds, and noisier SCNR) yield below $0.1\%$ violation for Conformal-serviceable versus up to $23.86\%$ for MAP. 
\begin{figure}[!t]
\centering
\includegraphics[width=\linewidth]{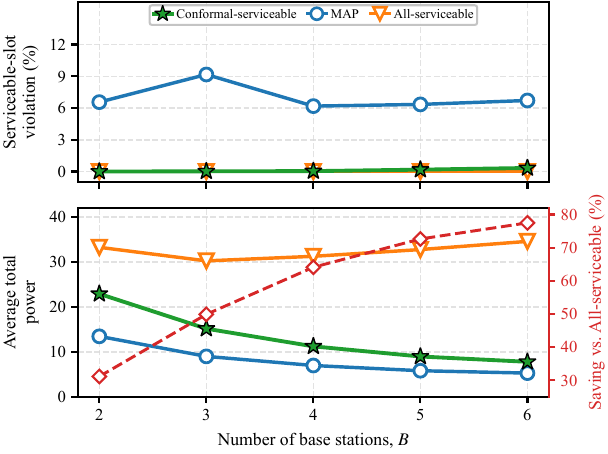}
\caption{Scaling with $B$ under a circular BS layout: serviceable-slot violation and power with savings over All-serviceable.}
\label{fig:b_scalability}
\end{figure}

For the scalability experiment in Fig.~\ref{fig:b_scalability}, we use a separate circular BS layout with equally spaced stations, while retaining the other parameters in Sec.~\ref{subsec:setup}. Fig.~\ref{fig:b_scalability} shows that increasing $B$ lowers Conformal-serviceable power and enlarges its saving over All-serviceable, with only a modest change in violation. At $B=6$, only $7.53$ constraints are retained on average for $63$ serviceable patterns, indicating that selective robustification avoids the full combinatorial set. For larger networks, the $2^B$ predictor outputs and one LMI per retained pattern can increase computation and memory costs.

\section{Conclusion}
This paper proposed a serviceability-conditioned conformal robust allocation framework for UAV tracking under dynamic blockage. The framework calibrates joint availability prediction sets and separates serviceable patterns from physical sensing outage. For fixed per-slot beam directions, the FIM-constrained scalar-power allocation is formulated as an SDP, while the reliability analysis attributes serviceable-slot failures to prediction-set miscoverage or optimization infeasibility and accounts separately for physical outage. Future work will investigate calibration under policy shift and structured prediction for larger networks.

\bibliographystyle{IEEEtran}
\bibliography{refs}

\end{document}